\def\BibTeX{{\rm B\kern-.05em{\sc i\kern-.025em b}\kern-.08em
    T\kern-.1667em\lower.7ex\hbox{E}\kern-.125emX}}
\newenvironment{proof}{\noindent \textit{Proof.}}{\hfill$\square$}
\newcommand{\removelatexerror}{\let\@latex@error\@gobble}
\newcommand{\unsafe}{\mathrm{u}}
\newcommand{\safe}{\mathrm{s}}
\newcommand{\E}{\mathbb{E}}
\newcommand{\pw}{\mathbf{w}}
\newcommand{\px}{\mathbf{x}}
\newcommand{\pu}{\mathbf{u}}
\newcommand{\pX}{\mathbf{x}}
\newcommand{\erf }{\text{erf}}
\newcommand{\reals}{\mathbb{R}}
\newcommand{\naturals}{\mathbb{N}}
\newcommand{\N}{\mathcal{N}}
\newcommand{\naturalszero}{\mathbb{N}_{\geq 0}}
\renewcommand{\P}{\mathcal{P}}
\renewcommand{\L}{\mathcal{L}}
\newcommand{\C}{\mathbbmss{O}}
\newcommand{\NK}{\mathbb{N}_{\leq K}}
\renewcommand{\O}{\mathcal{O}}
\newcommand{\up}[1]{\overline{#1}}
\newcommand{\low}[1]{\underline{#1}}
\newtheorem{lemma}{Lemma}
\newtheorem{definition}{Definition}
\newtheorem{problem}{Problem}
\newtheorem{proposition}{Proposition}
\newtheorem{corollary}{Corollary}
\newtheorem{remark}{Remark}
\newtheorem{theorem}{Theorem}
\newacronym{lp}{LP}{Linear Programming}
\newacronym{gd}{GD}{Gradient Descent}
\newacronym{pwc}{PWC}{Piecewise Constant}
\newacronym{pwa}{PWA}{Piecewise Affine}
\newacronym{sos}{SOS}{Sum-of-Squares}
\newacronym{sdp}{SDP}{Semi-Definite Programming}
\newacronym{cbfs}{CBFs}{Control Barrier Functions}
\newacronym{sbfs}{SBFs}{Stochastic Barrier Functions}
\newacronym{pscbf}{PS-CBFs}{Piecewise Stochastic Control Barrier Functions}
\newacronym{cpscbf}{cPS-CBFs}{Constant Piecewise Stochastic Control Barrier Functions}
\newacronym{ibp}{IBP}{Interval Bound Propagation}
\title{\LARGE \bf 
Piecewise Control Barrier Functions for
Stochastic Systems
}
\author{
Rayan Mazouz$^{1}$, Luca Laurenti$^{2}$, Morteza Lahijanian$^{1}$
\thanks{$^1$Authors are with the University of Colorado Boulder, USA
\texttt{\{firstname.lastname\}@colorado.edu}}
\thanks{$^2$Authors are with Delft University of Technology, Netherlands
\texttt{\{l.laurenti\}\@tudelft.nl}}
}
\begin{document}
\AddToShipoutPictureBG*{%
  \AtPageUpperLeft{%
    \hspace{12.5cm}%
    \raisebox{-1.5cm}{%
      \makebox[0pt][r]{To appear in the 64th IEEE Conference on Decision and Control (CDC 2025)}}}}

\maketitle
\thispagestyle{empty}
\pagestyle{plain}

\maketitle

\begin{abstract}
This paper presents a method for the simultaneous synthesis of a barrier certificate and a safe controller for discrete-time nonlinear stochastic systems. Our approach, based on \emph{piecewise stochastic control barrier functions}, reduces the synthesis problem to a minimax optimization, which we solve exactly using a dual linear program with zero gap. This enables the joint optimization of the barrier certificate and safe controller within a single formulation. The method accommodates stochastic dynamics with additive noise and a bounded continuous control set. The synthesized controllers and barrier certificates provide a formally guaranteed lower bound on probabilistic safety. Case studies on linear and nonlinear stochastic systems validate the effectiveness of our approach.
\end{abstract}

\section{Introduction}

Real-world controlled systems often exhibit intricate dynamics and inherent uncertainty, making safety assurance a critical challenge in \emph{safety-critical} applications, such as autonomous vehicles \cite{shalev2017formal}, robotic systems \cite{guiochet2017safety}, and aerospace applications \cite{mazouz2021dynamics, reed2024shielded}. Traditional control design methods for such systems are often time-consuming and error-prone, with safety analysis relying heavily on extensive testing \cite{kress2018synthesis, elimelech2024falsification}.
In contrast, formal approaches such as \emph{\gls{cbfs}}~\cite{ames2019control,ames2016control}
provide a principled framework for synthesizing \emph{safe-by-design} controllers \cite{tabuada2009verification,lahijanian2015formal}.  
However, extending CBF synthesis to nonlinear systems with stochastic disturbances remains a significant challenge. 
In this work, we aim to address this gap by introducing a method for synthesizing stochastic CBFs for general discrete-time nonlinear stochastic systems. 

A CBF is a Lyapunov-like function used to enforce safety constraints by ensuring the \emph{forward invariance} of a defined safe set. When the CBF conditions are satisfied, a control law can be synthesized (extracted) to keep the system within a safe region. \gls{cbfs} have been extensively studied for deterministic systems \cite{ames2019control, ames2016control}.  
For stochastic systems, \emph{\gls{sbfs}} provide a formal framework for probabilistic safety verification under a given control policy~\cite{ mazouz2024piecewise}.  
The synthesis of CBFs using SBF certificates is an emerging research area \cite{SANTOYO2021109439, jagtap2020formal, mazouz2022safety}.  
In \cite{SANTOYO2021109439}, a safe feedback law is obtained through an iterative synthesis of an SBF; however, this approach is limited to polynomial systems and lacks guarantees on termination.  
In \cite{jagtap2020formal}, an \gls{sos} approach is employed to identify a safe control policy for generalized polynomial dynamics under uncertainty.  
In \cite{mazouz2022safety}, the \gls{sos}-convex property of the synthesized certificate is leveraged to derive a safe controller for general nonlinear systems.  
However, ensuring the SOS-convex property, even with \gls{sdp} solvers, is not always guaranteed \cite{ahmadi2009sum}.  

In this work, we propose a stochastic CBF (s-CBF) synthesis method for discrete-time nonlinear stochastic systems with continuous control sets to ensure safety over finite or infinite time horizons.  
Unlike traditional CBF and SBF methods \cite{SANTOYO2021109439, jagtap2020formal, mazouz2022safety}, which typically assume that either the barrier function or the controller is predefined, our approach jointly synthesizes both.  
However, this introduces a key challenge: the functional optimization problem requires simultaneously searching for both the barrier function (SBF) and a safe controller (CBF),  
which generally leads to a nonconvex bi-level optimization problem.
To address this complexity, we leverage \emph{piecewise} SBFs~\cite{mazouz2024piecewise}, constraining the functional optimization to a structured class of piecewise functions.
In particular, we show that by focusing on piecewise constant (PWC) functions, the synthesis of s-CBFs can be formulated as a minimax optimization problem.  
Notably, we demonstrate that this minimax problem can be efficiently solved via linear programming (LP), enabling the use of a wide variety of LP solvers and enhancing scalability.  
Our evaluations on both linear and nonlinear stochastic systems with varying dimensionalities validate the theoretical guarantees and illustrate the method's effectiveness for complex systems.  

In short, this work makes the following contributions:
\begin{itemize}
    \item A derivation of  PWC s-CBFs, formulated as a minimax optimization problem,
    \item An exact (and iteration-free) approach to solving the PWC s-CBF synthesis (minimax optimization) problem, using a linear program,
    \item Validation of the theory and illustrating the performance of the proposed method through a series of benchmark problems on linear and nonlinear stochastic systems.
\end{itemize}

\subsection*{Related Work}
\indent Probabilistic safe control design is a rich field that has been extensively researched from various lenses.
In \cite{cosner2023robust}, the problem of safe control synthesis using CBFs is extended to stochastic systems and formulated as a quadratic program.
In that work, the \gls{cbfs} are designed manually, making the solution optimal for a given barrier. 
The same holds for the method presented in \cite{vahs2024non}, which takes a belief-space approach to synthesize safe controllers for continuous-time affine-in-control SDEs under a predefined level set.
Predefining the barrier or policy a priori is a sub-optimal approach.
In this work, we simultaneously synthesize a safe controller and a barrier certificate for systems with general discrete-time nonlinear dynamics and unbounded-support additive noise.

In \cite{cauchi2019efficiency}, a method for safe control synthesis is proposed using Interval Markov Decision Processes under finite actions via dynamic programming.
That work is extended to continuous action spaces in \cite{delimpaltadakis2023interval} under certain convexity conditions on the control space.
Other methods include data-driven and learning based approaches that identify regions of the control space that ensure safe behavior, referred to as permissible strategies \cite{mazouz2024data} and shields \cite{reed2024learning}.
These approaches do not directly synthesize a safe controller that maximizes the probability of safety.
In \cite{gracia2024temporal}, a distributionally robust control synthesis method is proposed for switched stochastic systems with uncertain distributions.
Learning-based approaches have also been introduced for safe reach-avoid specifications \cite{vzikelic2023learning}, safe reinforcement learning, and learning safe controllers using Gaussian Processes \cite{berkenkamp2016safe}.
None of these works simultaneously synthesize a barrier certificate and a safe controller as a single LP optimization.
\section{Problem Formulation}
We consider a controlled discrete-time stochastic system described by the following stochastic difference equation
\begin{equation}
    \begin{split}
    \label{eq:system}
      &  \px_{k+1} = f(\px_k, \pu_k) + \pw_k,
    \end{split}
\end{equation}
where $\px_k \in \reals^n$ and $\pu_k \in U \subset \mathbb{R}^m$ are  respectively the state and control {input} at time $k \in \naturals$, and $f:\reals^n \times U \to \reals^n$ is a possibly nonlinear function that is assumed to be continuous. 
Term $\pw_k $ is random variable that is independent and identically distributed at every time-step with values in $\mathbb{R}^w$, which we assume to be Gaussian with zero-mean and covariance matrix $\Sigma$, i.e., 
$\pw_k \sim p_{\mathbf{w}} = \mathcal{N}(0,\Sigma)$.
The control input $\pu_k$ at state $x$ is determined by a stationary feedback controller $\pi:\mathbb{R}^n \to U$, i.e., $\pu_k = \pi(x)$. Intuitively, System \eqref{eq:system} represents a general model of nonlinear dynamical systems with additive noise, whose one-step dynamics are represented by vector field $f$.

For a given state $x \in \mathbb{R}^n,$ action $u\in U$ and (Borel measurable) set $X\subseteq \mathbb{R}^n$, we define the transition kernel $T(X\mid x, u)$ for System \eqref{eq:system} as
\begin{equation}
\label{eq:transition_kernel}
    T(X\mid x,u):= \int_{\mathbb{R}^{\mathrm{w}}} \mathbf{1}_X (f(x,u) + w )p_{\mathbf{w}}(w)dw,
\end{equation}
where $\mathbf{1}_X$ is the indicator function for $X$ such that
$\mathbf{1}_X(x) = 1$ if $x \in X$ and $\mathbf{1}_X(x) = 0$ if $x \not\in X$. For a given a controller $\pi$, the  kernel induces a  unique probability measure $\Pr$ for System \eqref{eq:system}~\cite{bertsekas2004stochastic} such that for any $N\in \naturalszero$, initial condition $x_0 \in \mathbb{R}^n$ and  measurable sets $X_0, X_k \subseteq X$, it holds that
\begin{equation}
    \label{eq:prob measure}
    \begin{aligned}
        &\Pr[\pX_0\in X_0] =  \mathbf{1}_{X_0}(x_0), \\
        &\Pr[\pX_k\in X_k   \mid \pX_{k-1}=x, \, \pi] =  T(X_k \mid x,\pi(x)).
    \end{aligned}
\end{equation}

We now define Probabilistic Safety, which is the main objective we want to optimize in this paper, through the synthesis of the controller $\pi$.
\begin{definition}[Probabilistic Safety]
    \label{def:prob safety}
    Given a bounded safe set $X_\safe \subset \reals^n$ and initial set $X_0 \subseteq X_\safe$, \emph{probabilistic safety} of System~\eqref{eq:system} under controller $\pi$ for $N$ steps is  defined as
    \begin{multline*}
        P_\safe(X_\safe, X_0, N, \pi) = \\
        \inf_{x_0 \in X_0} \Pr[\px_k \in X_\safe \; \forall k \leq N \mid \px_0 = x_0, \pi].
    \end{multline*}
\end{definition}
We now state our goal in this paper: synthesizing a controller that maximizes probabilistic safety for System \eqref{eq:system}.
\begin{problem}[Safe Control Synthesis]
    \label{Prob:synthesis}
    {Consider stochastic System~\eqref{eq:system}}, compact safe set $X_\safe \subset \reals^n$, and initial set $X_0 \subseteq X_\safe$. Given a time horizon $N \in \mathbb{N}_0 \cup \{\infty\}$,
    synthesize a controller $\pi^*$ such that
\begin{equation*}
  \pi^* \in \arg\max_{\pi}  P_\safe(X_\safe, X_0, N, \pi^*).
\end{equation*}
\end{problem}
Problem~\ref{Prob:synthesis} seeks an optimal feedback controller that maximizes Probabilistic Safety for System \eqref{eq:system}. Because of the nonlinearity of $f$, despite the additivity of the noise, this problem presents several challenges that need to be addressed. First of all, $P_\safe$ cannot be computed in closed form. Furthermore, in Problem \ref{Prob:synthesis}, we seek an optimal feedback controller with $U$ possibly containing uncountable many controls. 
Consequently, existing approaches that generally consider a finite number of actions in $U$ cannot be applied in this setting \cite{cauchi2019efficiency}. 

\begin{remark}  
    \label{remark:not limited to gaussian}
    While we assume \(\pw_k\) is Gaussian, this assumption is not strictly required by our framework and is made solely for clarity of presentation. In fact, the only requirement on \(\pw_k\) is that the resulting transition kernel satisfies \(T(X \mid x, u)\) being computable (or analytically over-approximable). This condition holds for Gaussian distributions \cite{adams2022formal}. 
\end{remark}

\section{Stochastic Control Barrier Functions}
\label{sec:stochastic-cbf}
In this section, we provide an overview of stochastic CBFs (s-CBFs), and their relation to safe control synthesis. Consider System~\eqref{eq:system} with safe set $X_\safe \subset \reals^{n}$, initial set $X_0 \subseteq X_\safe$, and unsafe set $X_\unsafe = \reals^{n}\setminus X_\safe$. 
The following definition provides the conditions for an s-CBF.

\begin{definition}[Stochastic CBF (s-CBF)]
\label{def:barrier}
A real-valued function $B:\mathbb{R}^n \to \mathbb{R}_{\geq 0}$
is called a \emph{stochastic control barrier function} (s-CBF) for System~\eqref{eq:system} if there exist controller $\pi$ and scalars $\eta, \beta \in [0,1]$ such that
\begin{subequations}
    \begin{align}
       \label{eq:barrier_nonnegative} & B(x) \geq 0 &&\forall x\in \mathbb{R}^n \\
        \label{eq:barrier_unsafe} &B(x) \geq 1  &&\forall x\in X_\unsafe \\
        \label{eq:barrier_initial}  &B(x) \leq \eta  &&\forall x\in X_0 \\
       \label{eq:control_martingale}  &\E[B(f(x, \pi(x)) + \pw | x, \pi(x)] \! \leq \!  B(x) + \beta  && \forall x\in X_\safe.
    \end{align}
\end{subequations}
\end{definition}

Let $B$ with scalar $\eta,\beta$ be a s-CBF for System~\eqref{eq:system} under $\pi$.  Then, it holds that, for time horizon $N\in \naturalszero$ \cite{laurenti2023unifying}, 
    \begin{align}
        {\inf_{x_0 \in X_0}} P_\safe(X_\safe, {x_0},N, \pi)
        \geq 1-( \eta + \beta N). \label{eq:probability-sCBF}
    \end{align}
In other words, a valid s-CBF provides a probabilistic lower bound on remaining in the safe set via \eqref{eq:probability-sCBF}. 
If the right hand side of the expression is greater or equal to a pre-defined safety probability threshold $\delta_\safe \in [0,1]$, i.e., $1 - (\eta + \beta N) \geq \delta_\safe$, then $B$ is called a barrier \emph{certificate}. 

Finding such a $B$ requires solving a functional optimization problem. 
This problem is particularly challenging due to the simultaneous search for both a function $B$ and a controller $\pi$ that maximizes safety. Unlike standard verification and control synthesis problems, where either $B$ or $\pi$ are fixed, respectively, here both must be optimized. 
Moreover, the martingale condition in \eqref{eq:control_martingale} introduces additional difficulties as it requires the expectation of the composition of $B$ with the dynamics of System~\eqref{eq:system}. 
To mitigate these challenges, we focus on piecewise constant (PWC) s-CBFs.

\subsection{Piecewise Constant s-CBFs}
Piecewise constant (PWC) functions provide an expressive class of functions for barrier synthesis while addressing some of the challenges in the functional optimization problem discussed above~\cite{mazouz2024piecewise}.

Consider a partition $\{ X_1,\ldots,X_K \}$ of safe set $X_\safe$ in $K \in \mathbb{N}$ compact sets, i.e.,
\begin{equation}
    \label{eq:partitions}
    \cup_{i=1}^K X_i = X_\safe \quad \text{ and } \quad X_i\cap X_j = \emptyset \quad \forall i\neq j \in \NK,
\end{equation}
such that vector field $f$ is continuous in each region $X_i$. 
Note that the boundary of each set has measure zero w.r.t. $T(\cdot  \mid x,\pi)$ for any $x \in X_\safe$. Further, let $B$ be a PWC function such that,
for all $i \in \NK$, $b_i \in \reals_{\ge 0}$ and
\begin{equation}
    \label{eq:pwf}
    B(x) = 
    \begin{cases}
        b_i & \text{if } x \in X_i\\
        1 & \text{otherwise.}
    \end{cases}
\end{equation}

\noindent
Then, the following corollary characterizes the PWC s-CBF.

\begin{corollary}[PWC s-CBF]
    \label{col:ps-cbf}
    The PWC function in~\eqref{eq:pwf} is an s-CBF for System~\eqref{eq:system} if, for every $i \in \NK$, there exist control $u_i \in U$ and scalars $\eta, \beta_i \in [0,1]$ such that
    \begin{subequations} 
        \begin{align}
            & b_i\geq 0,  && \forall x \in X_i, \label{eq:piecewise_nonnegative}\\
            & b_{i} \leq \eta, && \forall x \in X_i \cap X_0, \label{eq:piecewise_initial}\\
            & \sum_{j = 1}^K b_j T(X_j \mid x, {u_i})  && \nonumber \\
            & \qquad  + T(X_\unsafe \mid x, {u_i}) \leq  b_i + \beta_{i} && \forall x \in X_i, \label{eq:piecewise_expectation}
        \end{align}
    \end{subequations}
    where 
    $T$ is the transition kernel in \eqref{eq:transition_kernel}. Then, for a controller $\pi$ that assigns $\pi(x) = u_i$ for all $x \in X_i$ and for all $i \in \NK$, it follows that
    \begin{equation}
        \label{eq:pwb_p-safe}
        P_\safe(X_\safe,X_0, N, \pi) \geq 1 - (\eta + N\cdot \max_{i \in \NK }\beta_i). 
    \end{equation}
\end{corollary}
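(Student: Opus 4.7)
The plan is to verify that the PWC function $B$ defined by \eqref{eq:pwf}, together with the piecewise controller $\pi(x) = u_i$ for $x \in X_i$, satisfies all four conditions of Definition~\ref{def:barrier} with constants $\eta$ and $\beta = \max_{i \in \NK}\beta_i$. Once this is established, \eqref{eq:pwb_p-safe} follows immediately by substituting into the general s-CBF safety bound \eqref{eq:probability-sCBF}.

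First I would handle the three pointwise conditions \eqref{eq:barrier_nonnegative}--\eqref{eq:barrier_initial} since they are essentially bookkeeping. Nonnegativity on $\mathbb{R}^n$ is immediate: for $x\in X_i$, $B(x)=b_i \ge 0$ by \eqref{eq:piecewise_nonnegative}; for $x\notin X_\safe$, $B(x)=1$. The unsafe condition $B(x)\ge 1$ on $X_\unsafe$ is built into the definition \eqref{eq:pwf}. The initial condition follows because $X_0\subseteq X_\safe$ is partitioned by the $X_i$, and on each nonempty intersection $X_i\cap X_0$ we have $B\equiv b_i \le \eta$ by \eqref{eq:piecewise_initial}.

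The main step is the martingale condition \eqref{eq:control_martingale}. For $x\in X_i$, $\pi(x)=u_i$, so I would rewrite the conditional expectation as an integral against $p_{\pw}$ and then split $\reals^n = X_\unsafe \cup \bigcup_{j=1}^K X_j$:
\begin{align*}
\E\bigl[B(f(x,u_i)+\pw)\bigr]
&= \int_{\reals^n} B(y)\,T(dy\mid x,u_i) \\
&= \sum_{j=1}^K b_j\, T(X_j\mid x,u_i) + 1\cdot T(X_\unsafe\mid x,u_i),
\end{align*}
where the last equality uses $B\equiv b_j$ on $X_j$, $B\equiv 1$ on $X_\unsafe$, and the fact noted after \eqref{eq:partitions} that boundaries have zero measure under $T(\cdot\mid x,u_i)$ (so the disjoint pieces cover $\reals^n$ up to a null set). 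Condition \eqref{eq:piecewise_expectation} then gives this sum is $\le b_i + \beta_i = B(x) + \beta_i \le B(x) + \max_j \beta_j$, which is exactly \eqref{eq:control_martingale} with $\beta = \max_j \beta_j$.

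With all four conditions verified, $B$ is an s-CBF with parameters $(\eta, \max_j\beta_j)$ under $\pi$, so applying \eqref{eq:probability-sCBF} yields $P_\safe(X_\safe, X_0, N, \pi) \ge 1 - (\eta + N\max_{i}\beta_i)$. The only nontrivial step is the decomposition of the expectation into a sum over the partition plus the unsafe tail; this is where the measurability of the partition and the measure-zero-boundary observation are essential, but both are given by the setup preceding the corollary.
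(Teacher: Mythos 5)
Your proposal is correct: it is the standard direct verification that the PWC function and piecewise controller satisfy conditions \eqref{eq:barrier_nonnegative}--\eqref{eq:control_martingale} of Definition~\ref{def:barrier} with $\beta = \max_i \beta_i$ (the key step being the decomposition of the expectation of the simple function $B$ over the partition plus $X_\unsafe$), followed by an application of \eqref{eq:probability-sCBF}. The paper itself gives no in-text argument and simply defers to \cite[Theorem 2]{mazouz2024piecewise}; your write-up supplies precisely the reasoning that citation encapsulates, so there is no substantive difference in approach.
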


The proof follows directly from~\cite[Theorem 2]{mazouz2024piecewise}.
Following this corollary, the synthesis of PWC s-CBF can be formulated as an optimization problem, where the size of the partition is a design parameter. 
The benefit of the formulation is that the expectation in \eqref{eq:control_martingale} is reduced to a simple summation in~\eqref{eq:piecewise_expectation}.  
The challenge however lies in the characterization of transition kernel $T$, which is parameterized with $u_i$.

\section{PWC s-CBF Synthesis}
\label{sec:cbf-synthesis}

In this section, we formally set up the general optimization problem for  synthesis of PWC s-CBF.
Throughout this section, we consider a specific partition of $X_s$ as detailed below.

For covariance $\Sigma$ of noise $\pw_k$, let $\mathcal{T}$ denote the Mahalanobis (linear) transformation\footnote{Specifically, $\mathcal{T}= \Gamma^{1/2} V^T$, where $\Gamma = V^T \Sigma V$ is a diagonal matrix with the eigenvalues of $\Sigma$ as the entries, and $V$ is the corresponding orthonormal eigenvector matrix.}. 
Denote by $\mathcal{T}(X_s)$, the transformation of $X_s$ by $\mathcal{T}$.  Then, we consider the partition in~\eqref{eq:partitions} to be an axis-aligned grid in $\mathcal{T}(X_s)$, i.e., for every $i \in \NK$, $\mathcal{T}(X_i)$ is a hyper-rectangle.
Using this partition, we derive an exact expression for the transition kernel.

\begin{proposition}[{\cite[Proposition 1]{cauchi2019efficiency}}]
    \label{prop:tran kernel}
    Let $X_j \subset \reals^n$ be a region such that $\mathcal{T}(X_j) = [\low{v}^{(1)}, \bar{v}^{(1)}] \times \ldots \times [\low{v}^{(n)}, \bar{v}^{(n)}]$ is a hyper-rectangle, where $\low{v}^{(i)}, \bar{v}^{(i)} \in \reals$ denote the lower and upper bounds of the $i$-th dimension of the hyper-rectangle.
    Then, for a given point $x \in \reals^n$ and control $u\in U$, 
    \begin{equation}
    \label{eq:proderf}
      T(X_j | x, u) =
     \frac{1}{2^n} \prod_{i=1}^{n} \left (
        \erf (\frac{y^{(i)} - \low{v}^{(i)}}{\sqrt{2}}) 
        -
        \erf (\frac{y^{(i)} - \bar{v}^{(i)}}{\sqrt{2}}) 
        \right )\!,
    \end{equation}
    where $\erf(\cdot)$ is the error function, $y = \mathcal{T} f(x, u)$, and $y^{(i)}$ is the $i$-th component of $y$. 
\end{proposition}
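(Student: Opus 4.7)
The plan is to reduce the $n$-dimensional Gaussian integral defining $T(X_j\mid x,u)$ to a product of one-dimensional Gaussian integrals by applying the linear change of variables $z = \mathcal{T}(f(x,u)+w)$, and then to exploit that $\mathcal{T}(X_j)$ is axis-aligned so that the multivariate integral decouples across coordinates.

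Starting from the definition in~\eqref{eq:transition_kernel},
\[
T(X_j\mid x,u) \;=\; \int_{\reals^n}\mathbf{1}_{X_j}\!\bigl(f(x,u)+w\bigr)\, p_\pw(w)\, dw,
\]
I would perform the substitution $z = \mathcal{T}(f(x,u)+w) = y + \mathcal{T}w$, with $y := \mathcal{T}f(x,u)$. The transformation $\mathcal{T}$ is constructed from the eigendecomposition $\Sigma = V\Gamma V^T$ so that $\mathcal{T}w$ has independent standard normal components, and the Jacobian of the substitution cancels the factor involving $\sqrt{\det\Sigma}$ in the normalizing constant of $p_\pw$. Under the substitution, the integrating measure therefore becomes the density of $\mathcal{N}(y, I_n)$ with respect to $dz$. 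At the same time, the indicator $\mathbf{1}_{X_j}(f(x,u)+w)$ becomes $\mathbf{1}_{\mathcal{T}(X_j)}(z)$, and by hypothesis $\mathcal{T}(X_j) = \prod_{i=1}^n[\low{v}^{(i)},\bar{v}^{(i)}]$ is the axis-aligned hyper-rectangle in the transformed coordinates.

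With the integrand factored across coordinates, the $n$-dimensional integral separates into a product of one-dimensional Gaussian integrals,
\[
T(X_j\mid x,u) \;=\; \prod_{i=1}^n \int_{\low{v}^{(i)}}^{\bar{v}^{(i)}} \frac{1}{\sqrt{2\pi}}\,\exp\!\Bigl(-\tfrac{1}{2}\bigl(z^{(i)}-y^{(i)}\bigr)^2\Bigr)\,dz^{(i)}.
\]
Each factor is then evaluated via the standard identity
$\int_a^b \tfrac{1}{\sqrt{2\pi}}e^{-\tfrac{1}{2}(t-\mu)^2}\,dt \;=\; \tfrac{1}{2}\bigl[\erf(\tfrac{b-\mu}{\sqrt{2}}) - \erf(\tfrac{a-\mu}{\sqrt{2}})\bigr]$,
and the odd symmetry $\erf(-s) = -\erf(s)$ lets me rewrite each term in the form $\erf(\tfrac{y^{(i)} - \,\cdot\,}{\sqrt{2}})$. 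Collecting the $n$ factors of $\tfrac{1}{2}$ into the prefactor $1/2^n$ then yields the claimed closed-form expression.

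The main obstacle is the linear-algebra bookkeeping in the change of variables: one has to verify that the particular $\mathcal{T} = \Gamma^{1/2}V^T$ (or its whitening analogue) indeed diagonalizes $\Sigma$ so that $\mathcal{T}w$ has identity covariance, and that the Jacobian $|\det \mathcal{T}^{-1}|$ combines with the prefactor of $p_\pw$ to give the clean standard-normal density on $z$. Once this whitening step is in place, the factorization across coordinates and the evaluation of each one-dimensional integral via $\erf$ are routine, and no approximation is introduced because the rectangular bounds $\low{v}^{(i)},\bar{v}^{(i)}$ describe $\mathcal{T}(X_j)$ exactly.
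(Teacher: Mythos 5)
Your proof is correct and is essentially the standard argument: the paper itself offers no proof of this proposition, citing \cite[Proposition 1]{cauchi2019efficiency}, whose derivation is exactly your whitening change of variables $z=\mathcal{T}(f(x,u)+w)$, the coordinate-wise factorization over the axis-aligned rectangle $\mathcal{T}(X_j)$, and evaluation of each one-dimensional Gaussian integral via $\erf$ with the oddness identity collecting the $1/2^n$ prefactor. The only caveat, which you already flag, is that the whitening map must be $\Gamma^{-1/2}V^{T}$ (so that $\mathcal{T}\Sigma\mathcal{T}^{T}=I$ and the Jacobian cancels $\sqrt{\det\Sigma}$); the paper's footnote writes $\Gamma^{1/2}V^{T}$, an apparent typo.
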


Proposition~\ref{prop:tran kernel} provides an analytical expression for $ T $ given $ x $ and $ u $, facilitating the formulation of the s-CBF constraint in \eqref{eq:piecewise_expectation}.  
Nevertheless, note that the constraint in \eqref{eq:piecewise_expectation} must hold for all $ x \in X_i $.  
To simplify its formulation, we aim to derive affine lower and upper bounds on \eqref{eq:proderf} for all $ x \in X_i $.
The following lemma allows us to obtain these bounds using \gls{ibp} techniques \cite{zhang2018efficient, Mathiesen2022, xu2020automatic}.

\begin{lemma}[Affine Transition Kernel Bounds]
\label{lemma:affine tran kernel bounds}
Consider partition regions $X_i,X_j$ and the transition kernel in~\eqref{eq:proderf} for $x \in X_i$ and $u \in U$.  Then, using \gls{ibp},  affine functions 
\begin{subequations}
    \begin{align}
        \low{T}_{ij}(x,{u}) & = A_{ij}^{\bot}[x, {u}_i]^{T} + c_{ij}^{\bot}, \\
        \up{T}_{ij}(x,{u}) & = A_{ij}^{\top}[x, {u}_i]^{T} + c_{ij}^{\top},   
    \end{align}
\end{subequations}
    where $A_{ij}^{\top},A_{ij}^{\bot} \in \reals^{1\times (n+m)}$ are vectors and
$c_{ij}^{\top}, c_{ij}^{\bot} \in \reals$ are scalars, can be computed such that
\begin{align}
    \label{eq:sound_bounds}
    \low{T}_{ij}(x,{u}) \leq T(X_j \mid x, u) \leq \up{T}_{ij}(x, {u}) 
\end{align}
for all $x \in X_i$ and for all  $u\in U$.
\end{lemma}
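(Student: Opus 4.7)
The plan is to view the kernel from Proposition~\ref{prop:tran kernel} as a composition $T(X_j \mid x,u) = h \circ \mathcal{T} \circ f(x,u)$, where $f$ is the (possibly nonlinear) dynamics, $\mathcal{T}$ is a fixed linear map, and $h(y) = 2^{-n}\prod_{k=1}^{n} g_k(y^{(k)})$ with $g_k(z) = \erf((z-\low{v}^{(k)})/\sqrt{2}) - \erf((z-\up{v}^{(k)})/\sqrt{2})$. I would then derive affine upper and lower enclosures for each building block on the compact domain $X_i \times U$ via standard IBP rules \cite{zhang2018efficient,xu2020automatic} and compose them soundly to obtain the claimed coefficients $A_{ij}^{\bot}, A_{ij}^{\top}, c_{ij}^{\bot}, c_{ij}^{\top}$.

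First I would run a forward IBP pass on $f$: since $f$ is continuous on the compact set $X_i \times U$, this yields both concrete interval bounds on $f(x,u)$ and affine enclosures $A_f^{\bot}[x,u]^T + c_f^{\bot} \leq f(x,u) \leq A_f^{\top}[x,u]^T + c_f^{\top}$ valid on $X_i \times U$ for the standard nonlinearities that appear in $f$. Applying the linear map $\mathcal{T}$ preserves affinity and produces componentwise interval bounds $y^{(k)} \in [\low{y}^{(k)}, \up{y}^{(k)}]$ together with affine enclosures of each $y^{(k)}$ in $[x,u]$. Next, each $g_k$ is a smooth, bounded, bump-shaped scalar function of one argument; on the known interval I would invoke standard scalar linear relaxations of $\erf$ (tangent and chord bounds, analogous to those used for sigmoid-type activations) to obtain affine bounds $\low{\alpha}_k y^{(k)} + \low{\gamma}_k \leq g_k(y^{(k)}) \leq \up{\alpha}_k y^{(k)} + \up{\gamma}_k$ and refined output intervals $g_k(y^{(k)}) \in [\low{g}_k, \up{g}_k] \subseteq [0,1]$.

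The main obstacle will be the final step: bounding the product $\prod_{k=1}^{n} g_k(y^{(k)})$ by an \emph{affine} function of $[x,u]$, since direct multiplication of affine factors produces a multilinear, not affine, expression. I would resolve this by folding the factors in one at a time using McCormick envelopes: given two factors $a \in [\low{a},\up{a}]$ and $b \in [\low{b},\up{b}]$, the bilinear product $ab$ admits four sound affine inequalities on the rectangle (for instance, $ab \geq \low{a} b + a \low{b} - \low{a}\low{b}$). At each merge the accumulated product is relaxed to an expression that is affine in the ``running factor'' and the next $g_k$, then composed with the affine bound on $g_k(y^{(k)})$, which is itself affine in $[x,u]$ by the previous step. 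Multiplying by the constant $2^{-n}$ and collecting coefficients gives the claimed $\low{T}_{ij}$ and $\up{T}_{ij}$; soundness is preserved throughout because every component rule used---CROWN-style linearizations, McCormick envelopes, and linear arithmetic---is itself sound and sound relaxations compose.
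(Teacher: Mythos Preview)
Your construction is sound and is essentially a concrete, forward-mode instantiation of what the paper disposes of in two sentences: the paper's own proof simply invokes backward-mode LiRPA and appeals to \cite[Theorem~1]{xu2020automatic} for soundness, without spelling out how the erf terms or the $n$-fold product are relaxed. You unpack exactly those ingredients---scalar linear relaxations for the bump-shaped $g_k$, McCormick envelopes folded one factor at a time for the product, and sign-aware substitution when composing affine enclosures through each layer---which are precisely the primitive rules that an automatic LiRPA engine applies node by node. The only substantive difference is directionality: you propagate affine enclosures forward from $(x,u)$, whereas the paper names the backward mode; both are covered by the cited theorem, and backward mode tends to produce tighter coefficients for long compositions at essentially the same cost, while your explicit forward account has the virtue of being self-contained and makes clear why the resulting bounds are genuinely affine in $[x,u]$ rather than multilinear.
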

\begin{proof}
Using backward mode linear relaxation perturbation analysis (LiRPA), the lower and upper bounds are sound. 
The proof follows directly from \cite[Theorem 1]{xu2020automatic}. 
\end{proof}

\begin{remark} 
    Lemma~\ref{lemma:affine tran kernel bounds} can be extended to transition kernels beyond~\eqref{eq:proderf}. In fact, it can be generalized for any continuous and differentiable kernel. This implies that our approach is not limited to Gaussian noise $\pw_k$, as stated in Remark~\ref{remark:not limited to gaussian}.
\end{remark}

The benefit of Lemma~\ref{lemma:affine tran kernel bounds} is that it allows us to represent the bounds of $T$ using parameterizations in $x$ and $u$. 
However, these parameters are different in nature when it comes to PWC s-CBF synthesis.  That is, $x$ is a free parameter, i.e., s-CBF properties have to hold for all $x\in X_i$, whereas $u \in U$ is a decision variable, i.e., we seek to find the optimal $u$.
With this view and using the bounds in \eqref{eq:sound_bounds}, we can define the feasible set of the transition kernels for a given $u \in U$ and all $x \in X_i$ as
\begin{align}
    \label{eq:P-simplex}
        & \mathcal{P}_i(u) = \Big\{ T_i = (T_{i1},\ldots,T_{iK},  T_{i\unsafe})  \in [0,1]^{K+1} \quad s.t. \nonumber\\
        & \low{T}_{ij}(x,{u}) \leq T_{ij} \leq \up{T}_{ij}(x, {u})
        \quad \forall j \in \NK \cup \{\unsafe\}, \forall x \in X_i
       \nonumber \\
        & \hspace{3cm} \sum_{j=1}^{K} T_{ij} + T_{i\unsafe} = 1, 
        \Big \}.
\end{align}
This feasible transition kernel $\mathcal{P}_i(u)$ set forms a simplex, i.e., a convex polytope. 
This implies that we can rewrite the constraint in~\eqref{eq:piecewise_expectation} as a set of linear constraints.
Based on this, the following theorem defines the general PWC s-CBF optimization problem.

\begin{theorem}[PWC s-CBF Synthesis]
    \label{th:ps-cbf-general}
    Given a $K$-partition of $X_\safe$, let $\mathcal{B}_K$ be the set of PWC functions of the form in~\eqref{eq:pwf}. Further, let vector $\vec{u} = (u_1, \ldots, u_K) \in U^{K}$ and $\mathcal{P}(\vec{u}) = \mathcal{P}_1(u_1) \times \ldots \times \mathcal{P}_K(u_K)$, where each $\mathcal{P}_i(u)$ is the set of feasible transition kernels in~\eqref{eq:P-simplex}. 
    Finally, let $B^*$ and $\vec{u}^*$ be the solutions to the following optimization problem
    \begin{equation}
        (B^*,\vec{u}^*) = \arg\min_{B \in \mathcal{B}_K, {\vec{u} \in {U}^{K}}} \;  {\max_{(T_i)_{i=1}^K \in \mathcal{P}(\vec{u})}} \; \eta + N \beta 
    \end{equation}
    subject to 
    \begin{subequations} 
        \begin{align}
           \label{eq:ps-cbf-nonnegative} & b_i \geq 0, && \forall i \in \NK,\\
           \label{eq:ps-cbf-initial}  & b_i \leq \eta, &&\forall i : X_i \cap X_0 \neq \emptyset,\\
           \label{eq:ps-cbf-expectation}  & \sum_{j = 1}^{K} b_j \cdot T_{ij} + T_{i\unsafe} \leq  b_i + \beta_{i} && \forall i\in \NK, \\
            & T_{ij} \in [\low{T}_{ij}(x, u_{i}), \up{T}_{ij}(x, u_{i})] &&  \forall x \in X_i, \forall i \in \NK, \\
             &0 \leq \beta_{i} \leq \beta && \forall i\in \NK.
        \end{align}
    \end{subequations}
    Then, $B^*$ is a PWC s-CBF along with $\vec{u}^*$ that maximize safety probability, i.e., the RHS of~\eqref{eq:pwb_p-safe}.
\end{theorem}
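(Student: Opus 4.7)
The plan is to derive the theorem as a direct consequence of Corollary~\ref{col:ps-cbf}, once two things are verified: that every feasible point of the program yields a valid PWC s-CBF, and that the objective $\eta + N\beta$ upper-bounds (and at optimality equals) the safety-margin expression $\eta + N\max_i\beta_i$ appearing in~\eqref{eq:pwb_p-safe}.

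First, I would establish feasibility. Fix any feasible tuple $(B,\vec{u},\eta,\{\beta_i\},\beta)$ and define the controller $\pi(x) = u_i$ for $x \in X_i$. Constraints~\eqref{eq:ps-cbf-nonnegative} and~\eqref{eq:ps-cbf-initial} directly match~\eqref{eq:piecewise_nonnegative} and~\eqref{eq:piecewise_initial}. For the expectation-type condition~\eqref{eq:piecewise_expectation}, I would fix $i\in\NK$ and $x\in X_i$ and argue that the vector $\big(T(X_1\mid x,u_i),\ldots,T(X_K\mid x,u_i),T(X_\unsafe\mid x,u_i)\big)$ lies in $\mathcal{P}_i(u_i)$: by Lemma~\ref{lemma:affine tran kernel bounds} it satisfies all the affine envelopes, and it sums to one because $\{X_1,\ldots,X_K,X_\unsafe\}$ partitions $\reals^n$ up to a kernel-null boundary. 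Specializing the robust constraint~\eqref{eq:ps-cbf-expectation} to this particular element of the polytope yields the pointwise inequality required by Corollary~\ref{col:ps-cbf}, which then certifies $B$ as a PWC s-CBF under $\pi$.

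Second, I would relate the objective to the safety bound. Corollary~\ref{col:ps-cbf} gives $P_\safe(X_\safe,X_0,N,\pi) \ge 1-(\eta + N\max_i\beta_i)$, and since every $\beta_i$ is bounded by $\beta$, this is at least $1-(\eta + N\beta)$. Minimizing $\eta + N\beta$ therefore maximizes the guaranteed lower bound; at optimality one may always shrink $\beta$ to $\max_i\beta_i$ without breaking feasibility, so no slack is wasted. The inner $\max$ over $\mathcal{P}(\vec{u})$ encodes the robust reformulation: since the objective does not depend on $T_{ij}$, the maximum reduces to requiring~\eqref{eq:ps-cbf-expectation} to hold for every element of $\mathcal{P}_i(u_i)$, which by the inclusion above suffices to enforce it for the true kernel at every $x\in X_i$.

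The main obstacle I expect is making rigorous the reduction from the quantifier ``for all $x\in X_i$'' in~\eqref{eq:piecewise_expectation} to ``for all $(T_i)\in\mathcal{P}(\vec{u})$'' in the program. The direction we actually need is that the polytope over-approximates the kernel image, which is exactly the soundness of the LiRPA bounds in Lemma~\ref{lemma:affine tran kernel bounds}; the reverse inclusion is false in general and is not claimed, so the argument establishes sufficiency only (and hence a potentially conservative but always valid certificate). A secondary subtlety is verifying that the simplex normalization $\sum_j T_{ij} + T_{i\unsafe} = 1$ embedded in $\mathcal{P}_i(u_i)$ is actually satisfied by the true transition kernel, which follows from $T(\cdot\mid x,u)$ being a probability measure on $\{X_1,\ldots,X_K\}\cup\{X_\unsafe\}$ with the boundaries contributing zero mass.
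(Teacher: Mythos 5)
Your proposal is correct and follows essentially the same route as the paper's proof: reduce feasibility of the program to the conditions of Corollary~\ref{col:ps-cbf} via the soundness of the affine kernel bounds in Lemma~\ref{lemma:affine tran kernel bounds} (so the true kernel vector lies in $\mathcal{P}_i(u_i)$ and the robust constraint specializes to the pointwise martingale condition), then observe that $\beta \geq \max_i \beta_i$ makes minimizing $\eta + N\beta$ equivalent to maximizing the right-hand side of~\eqref{eq:pwb_p-safe}. Your write-up is in fact more explicit than the paper's terse argument on the key inclusion and the measure-zero boundary issue, but it is the same proof idea.
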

\begin{proof}
    We begin by observing that if $\forall i, b_i$s satisfy conditions~\eqref{eq:ps-cbf-nonnegative}-\eqref{eq:ps-cbf-expectation}, then conditions in~\eqref{eq:piecewise_nonnegative}-\eqref{eq:piecewise_expectation} are satisfied for the barrier function in~\eqref{eq:pwf}. The optimization seeks to maximize the safety probability for feasible values of the transition kernel $\mathcal{P}_i$, formulated as a minimax problem. $\mathcal{P}_i$ as defined in~\eqref{eq:P-simplex} is a function of the controller $\vec{u}$, and the optimization aims to find a controller that minimizes $(\eta + N\beta)$.    
    The values for $\beta_i$ are bounded in accordance with constraint~\eqref{eq:ps-cbf-expectation}. Then, following Corollary \ref{col:ps-cbf},
    $\mathbb{E}[B(f(x, u)) + \pw \; | \; {x, u}] - B(x)
    = \sum_{j = 1}^K b_j T(X_j \mid x, {u_i})  + T(X_\unsafe \mid x, {u_i}) - b_i \leq  \beta_{i} $.
    Taking the maximum over all $\beta_i$s is an upper bound on the martingale condition. Hence, optimizing over objective $(B^*,\vec{u}^*)$ finds a controller that maximizes the probability of safety defined in~\eqref{eq:probability-sCBF}.
\end{proof}

\begin{remark} In Theorem \ref{th:ps-cbf-general}, $N\in \naturalszero$ is finite. However, by setting $\beta_i = 0$ for all $i\in \NK$, the theorem can easily be extended to infinite-horizon problems.
\end{remark}

By solving the optimization problem in Theorem~\ref{th:ps-cbf-general}, we obtain a PWC controller as defined in Corollary~\ref{col:ps-cbf}, i.e., $\pi(x) = u^*_i$ for all $x \in X_i$. This controller is optimal within the class of PWC controllers for the given $K$-partition.  
In fact, this controller approximates the true optimal stationary controller $\pi^*$ as $K \to \infty$. Specifically, per \cite[Proposition 1]{mazouz2024piecewise}, as the partition size increases, the PWC SBF converges to a safety probability that is at most the optimal safety probability achievable with continuous SBFs. Consequently, if an optimal continuous s-CBF exists that maximizes $P_\safe$, the same $P_\safe$ can be attained with a sufficiently large $K$.  

Thus, solving the optimization problem in Theorem~\ref{th:ps-cbf-general} effectively solves Problem~\ref{Prob:synthesis} as $K \to \infty$. In the next section, we present an exact method for solving the problem in Theorem~\ref{th:ps-cbf-general} using dual programming.
\section{Dual Linear Program}
\label{sec:dual}

The optimization problem in Theorem \ref{th:ps-cbf-general} is a minimax problem, with the following decision variables:
\begin{subequations}
    \begin{align}
        &b = (b_1,\ldots,b_K) \in \mathbb{R}^{K}_{\geq 0}, && \label{eq:b-variable}\\
        & \vec{u} = (u_1,\ldots,u_K) \in U^{K}, && \label{eq:u-variable}\\
        &T_i = (T_{i1}, \ldots, T_{iK}, T_{i\unsafe}) \in \mathcal{P}_i(u_i) && \forall i\in \NK, \label{eq:p-variable}\\
        & \beta_i \in \mathbb{R}_{\geq 0} && \forall i\in \NK
        \label{eq:beta-variable},\\
        & \eta,\beta \in \mathbb{R}_{\geq 0}. &&
    \end{align}
\end{subequations}

The difficulty in the computability of this optimization problem is twofold. First, the product of decision variables $b_j$ and $T_{ij}$ makes the optimization problem bilinear. Second, the transition kernels $T_i$ are dependent on the control $u_i$, which makes this an embedded optimization problem. Generally, this class of optimization problems is non-convex, thus convex solvers which provide efficiency and convergence guarantees cannot be utilized. In this section, we propose a lossless convexification of the problem based on dual linear programming, such that the optimal solution to the problem can be efficiently computed. 

We start by observing that $x$ is an independent variable, and $u_i$ a decision variable. 
Since $x \in X_i$ is free, for every $i \in \NK$, we can set $x = x_i' - x_i''$, where decision variables $x_i', x_i'' \geq 0$. We conveniently define the vector $\tilde{z}_i = [x_i' - x_i'', u_i]^{T} \in \reals^{(n+m)\times 1}$. Then, the 
feasible transition kernel set $\mathcal{P}_i(u_i)$, which is a simplex, can be represented as
{\small 
\begin{align}
    \nonumber
    \mathcal{P}_i(u_i)  =&  \{T_i : H_i \, 
    \begin{bmatrix}
        T_i \\ \tilde{z_i}
    \end{bmatrix}
    \leq h_i \}, \\
    \label{eq:simplex-P}
    :=&  \underbrace{
    \begin{bmatrix}
        -I_{K+1} & A_i^{\bot} \\
        {-\vec{1}} & {\vec{0}} \\
        \vec{0}_{n + m} & -\vec{1}_{n + m} \vspace{2mm} \\
          \hdashline \vspace{-2mm} \\
        I_{K+1} & - A_i^{\top} \\
        {\vec{1}} & {\vec{0}} \\
         \vec{0}_{n + m} & \vec{1}_{n + m} 
    \end{bmatrix}
    }_{H_i}
    \begin{bmatrix}
       T_i \\~ \tilde{z_i}
    \end{bmatrix}
    \leq 
    \underbrace{
    \begin{bmatrix}
        -c_i^{\bot} \\
        {-1} \\
        -\tilde{z}_i^{\bot} \vspace{2mm} \\
            \hdashline \vspace{-2mm} \\
        c_i^{\top} \\
        {1} \\
        \tilde{z}_i^{\top}
    \end{bmatrix}}
    _{h_i},
\end{align}}
\noindent
where matrix {$H_i \in \reals^{2(K + n + m + 2) \times (K+n+m+1)}$}, vector $h_i \in \reals^{2(K+ n + m +1) \times 1}$, 
$I_{K+1}$ is the identity matrix of size $(K+1) \times (K+1)$, and $\vec{0}$ and $\vec{1}$ are vectors of zeros and ones of appropriate dimensions, respectively.
The inequality in~\eqref{eq:simplex-P} is applied element-wise to vectors. The lower and upper bounds on variable $\tilde{z}_i$ are denoted by  $ \tilde{z}_i^{\bot}, \tilde{z}_i^{\top} \in \reals^{(n+m)\times1}$, respectively. Further, 
{\small 
\begin{align*}
    A_i^{\bot} & = \begin{bmatrix} A_{i1}^{\bot}, \hdots A_{iK}^{\bot},& A_{iu}^{\bot} \end{bmatrix}^{T} \quad &&\in \reals^{(K+1) \times (n+m)}, \\
    A_i^{\top} & = \begin{bmatrix} A_{i1}^{\top}, \hdots A_{iK}^{\top},& A_{iu}^{\top} \end{bmatrix}^{T}  \quad &&\in \reals^{(K+1) \times (n+m)}, \\
    c_i^{\bot} & = \begin{bmatrix} c_{i1}^{\bot}, \hdots c_{iK}^{\bot},& c_{iu}^{\bot} \end{bmatrix}^{T} \quad &&\in \reals^{(K+1) \times 1}, \\
    c_i^{\top} & = \begin{bmatrix} c_{i1}^{\top}, \hdots c_{iK}^{\top},& c_{iu}^{\top} \end{bmatrix}^{T} \quad &&\in \reals^{(K+1) \times 1}.
\end{align*}
}
To constrain the sum of $T_{ij}$ to $1$ per~\eqref{eq:P-simplex}, two inequality constraints are added. For convenience, the rows of the matrix $H_i$ and vector $h_i$ can be rearranged
{\small 
\begin{align*}
    \tilde{H}_i = \!
    \begin{bmatrix}
        -I_{K+1} & A_i^{\bot} \\
        I_{K+1} & - A_i^{\top} \\
        -\vec{1} & {\vec{0}} \\
       \vec{1} & {\vec{0}} \vspace{1mm} \\
          \hdashline \vspace{-3mm} \\
        \bar{0}_{n + m} & -\bar{1}_{n + m} \\
        \bar{0}_{n + m} & \bar{1}_{n + m} 
    \end{bmatrix} \!=\!
    \begin{bmatrix}
        \tilde{H}_i^{p}\vspace{1mm} \\
          \hdashline \vspace{-3mm} \\
        \tilde{H}_i^{z}
    \end{bmatrix}
    , \;
    \tilde{h_i} = \!
        \begin{bmatrix}
        -c_i^{\bot} \\
        c_i^{\top} \\
         {-1} \\
         {1} \vspace{1mm} \\
          \hdashline \vspace{-3mm} \\
        -\tilde{z}_i^{\bot} \\
        \tilde{z}_i^{\top}
    \end{bmatrix}
    \! = \!
    \begin{bmatrix}
        \tilde{h}_i^{p}\vspace{1mm} \\
          \hdashline \vspace{-3mm} \\
        \tilde{h}_i^{z}
    \end{bmatrix}.
\end{align*}
}
Using $\tilde{H}_i$ and $\tilde{h}_i$, the inequality can be decomposed into two separate inequalities. It is noted that inequality $\tilde{H}_i^{z} \tilde{z}_i \leq  \tilde{h}_i^{z}$ simply denotes the bounds on decision variables $\tilde{z}_i$. 
The second inequality establishes the bounds on the transition kernels $T_{i}$, parameterized by $\tilde{z}_i$, and can be written as
{\small 
\begin{align}
\label{eq:bounds_T}
 \underbrace{
     \begin{bmatrix}
     \tilde{H}_i^{p_1} &
          \tilde{H}_i^{p_2} \\
    \end{bmatrix} }
    _{\tilde{H}_i^{p}}
       \begin{bmatrix}
        T_i \\ \tilde{z_i}
    \end{bmatrix}
    \leq 
    \underbrace{
    \begin{bmatrix}
        -c_i^{\bot} \\
        c_i^{\top} 
    \end{bmatrix}
    }_{\tilde{h}_i^{p}}, 
\end{align}
}
where
{\small 
\begin{align*}
    \tilde{H}_i^{p_1} & = [-I_{K+1},\; I_{K+1}; \; -\vec{1}; \; \vec{1}]^{T} && \in \reals^{2(K+2)\times(K+1)} \\
    \tilde{H}_i^{p_2} & = [A_i^{\bot},\; -A_i^{\top};  \; \vec{0}; \; \vec{0}]^{T}&& \in \reals^{2(K+2)\times(n+m)}
\end{align*}
}

Next, define vector $\bar{b} = (b,1)$, where $b$ is as defined in~\eqref{eq:b-variable}, and dual variable $\lambda_i \in \mathbb{R}^{2(K+2)}_{\geq 0 }$. Using duality, the problem in Theorem~\ref{th:ps-cbf-general} can be formalized as an LP.

\begin{theorem}[PWC s-CBF Dual LP]
 \label{th:dual}

The optimal solution $(b^{*}, u^{*}, \beta^{*}, \eta^{*})$ to the LP formulated below is equivalent (i.e., zero duality gap) to the optimal solution of the optimization problem in Theorem~\ref{th:ps-cbf-general}. 
\begin{equation*}
    \min_{b, t_i, \lambda_i} \eta + \beta N
\end{equation*}
subject to
\begin{subequations}
    \begin{align}
        & 0 \leq b_{i} \leq 1 && \forall i\in \NK,\\ 
        & b_{i} \leq \eta  &&\forall i : X_i \cap X_0 \neq \emptyset, \\   
        &  (\tilde{h}_i^{p})^{T} \lambda_i \leq {t_i} && \forall i\in \NK, \label{eq:dual-c} \\
        & (\tilde{H}_i^{p_1})^{T} \lambda_i = \bar{b}&& \forall i\in \NK,\\
        & {(\tilde{H}_i^{p_2})^{T} \lambda_i = 0} && \forall i\in \NK,\\
        &  \tilde{z}_i^{\bot} \leq \tilde{z}_i \leq  \tilde{z}_i^{\top} && \forall i\in \NK,\\
        &  \lambda_i \geq 0 && \forall i\in \NK,\\
        &  t_i \geq 0  && \forall i\in \NK, \\
            & 0 \leq \beta_{i} \leq \beta && \forall i\in \NK. \label{eq:dual-i}
     \end{align}
 \end{subequations}
\end{theorem}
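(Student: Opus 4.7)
The plan is to derive the LP of Theorem~\ref{th:dual} by dualizing the inner maximization of the minimax program in Theorem~\ref{th:ps-cbf-general}, and to establish zero duality gap via strong LP duality. I would proceed in three main steps.

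First, for fixed outer variables $b$ and $\vec{u}$, I would isolate the quantity being maximized in constraint~\eqref{eq:ps-cbf-expectation}: for each region $i$, the relevant inner program is
\[
    t_i \;:=\; \max_{T_i\in \mathcal{P}_i(u_i)}\; \bar b^{\,T} T_i,
\]
with $\bar b=(b_1,\dots,b_K,1)$, so that~\eqref{eq:ps-cbf-expectation} is equivalent to $t_i\le b_i+\beta_i$. Using the polyhedral representation~\eqref{eq:bounds_T}, this is a finite LP in $T_i$ parameterized by $\tilde z_i=[x,u_i]^{T}$. Since the true transition kernel always furnishes a feasible $T_i$ and the objective is bounded on $[0,1]^{K+1}$, strong LP duality applies and gives the identity
\[
    t_i \;=\; \min_{\lambda_i\ge 0,\;(\tilde H_i^{p_1})^{T}\lambda_i=\bar b}\; \lambda_i^{T}\bigl(\tilde h_i^{p}-\tilde H_i^{p_2}\tilde z_i\bigr).
\]

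Second, I would enforce that this dualized upper bound be valid uniformly in $x\in X_i$, as demanded by the ``for all $x$'' quantifier of Corollary~\ref{col:ps-cbf}. Because $x$ sits inside $\tilde z_i$ and ranges over a box, the inequality $\lambda_i^{T}(\tilde h_i^{p}-\tilde H_i^{p_2}\tilde z_i)\le t_i$ holds for every $\tilde z_i$ in its feasible range exactly when the coefficient of $\tilde z_i$ vanishes, i.e.\ when $(\tilde H_i^{p_2})^{T}\lambda_i=0$, at which point the inequality collapses to the linear constraint $(\tilde h_i^{p})^{T}\lambda_i\le t_i$ of~\eqref{eq:dual-c}. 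A key structural gain of this step is that the bilinear product $b_j T_{ij}$ in the primal disappears: the coefficients $b_j$ reappear only on the right-hand side of the linear equality $(\tilde H_i^{p_1})^{T}\lambda_i=\bar b$, which is jointly linear in $(b,\lambda_i)$ and hence amenable to an LP solver.

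Finally, I would assemble the pieces: combining the outer minimization over $b$, $\vec{u}$, $\eta$, $\beta$ with the per-region dualized variables $\lambda_i,\tilde z_i,t_i,\beta_i$ yields a single linear program whose variables and constraints match~\eqref{eq:dual-c}--\eqref{eq:dual-i} exactly, and the objective $\eta+N\beta$ is preserved. Zero duality gap then follows from strong LP duality for the inner problem together with the lossless reformulation in the second step. I expect the main obstacle to be rigorously justifying precisely this losslessness: one must verify that imposing $(\tilde H_i^{p_2})^{T}\lambda_i=0$ does not exclude any optimal primal-dual pair, i.e.\ that among the optimal multipliers $\lambda_i$ of the inner LP one can always be selected whose dual objective is independent of $\tilde z_i$. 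The natural argument is that the LiRPA relaxations of Lemma~\ref{lemma:affine tran kernel bounds} already over-approximate $T(X_j\mid x,u)$ uniformly over $x\in X_i$, so the worst case in $x$ is baked into $A_{ij}^{\bot},A_{ij}^{\top},c_{ij}^{\bot},c_{ij}^{\top}$, which makes the equality restriction innocuous and yields equivalence of optimal values between the two formulations.
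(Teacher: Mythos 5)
Your proposal follows essentially the same route as the paper's proof: dualize the inner maximization $\max_{T_i\in\mathcal{P}_i(u_i)}\bar{b}^{T}T_i$ via strong (Lagrangian/LP) duality to obtain the dual feasibility conditions $(\tilde{H}_i^{p_1})^{T}\lambda_i=\bar{b}$, $(\tilde{H}_i^{p_2})^{T}\lambda_i=0$, and $(\tilde{h}_i^{p})^{T}\lambda_i\le t_i$, and then collapse the bi-level problem into a single LP using the convexity/KKT structure of the inner problem, with zero gap from strong duality. The only notable difference is cosmetic: the paper derives $(\tilde{H}_i^{p_2})^{T}\lambda_i=0$ from the stationarity condition $\partial\mathcal{L}/\partial\tilde{z}_i=0$ of the inner Lagrangian, whereas you obtain it by requiring the dual objective to be independent of $\tilde{z}_i$ (uniformity over $X_i$); both yield the same constraint and the same conclusion.
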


\begin{proof}
    We split the proof of this optimization into three parts. First we show how the constraint in~\eqref{eq:ps-cbf-expectation} can be formulated as an inner minimax problem \cite{boyd2004convex}. Then, using Lagrangian duality \cite{ye_lagrangian, hager1976lagrange}, we show that the inner problem can be formulated as an LP. Finally, we show that, given the convex nature of the inner problem, the KKT conditions allow us to write the entire optimization in Theorem~\ref{th:ps-cbf-general} as one LP \cite[Sec. III]{sinha2017review} \cite{dempe2016solution}.
    
    Write the constraints $\bar{b}^{T} T_i \leq b_i + \beta_i$ and condition~\eqref{eq:bounds_T} as an inner optimization problem
     {\small 
     \[
        \left(\begin{aligned}
      { \min_{u_i}}     \max_{T_i\in \P_i} & \quad \bar{b}^{T}T_i\\
            \mathrm{s.t.} & \quad \tilde{H}_i^{p}  
            \begin{bmatrix}
                T_i \\ \tilde{z_i}
            \end{bmatrix} 
            \leq \tilde{h}_i^{p}
        \end{aligned}\right) \leq b_i + \beta_i.
    \]
    }
    Let the Lagrangian associated with the maximization be
    {\small 
    \begin{align*}
        \L(T_i, \tilde{z}_i, \lambda) & = \bar{b}^{T}T_i - \lambda_{i}^{T}(\tilde{H}_i^{p}\begin{bmatrix}
                T_i \\ \tilde{z_i}
            \end{bmatrix} 
            - \tilde{h}_i^{p} ) \\
            & = \bar{b}^{T}T_i - \lambda_{i}^{T}( \tilde{H}_i^{p_1} T_i - \tilde{H}_i^{p_2} \tilde{z}_i- \tilde{h}_i^{p} )
    \end{align*}
    }
    For strong duality, the following KKT condition must hold
    $$ \nabla_{T_i, \tilde{z}_i} \L(T_i, \tilde{z}_i, \lambda)  = 0
    $$
The first component of the derivative yields
{\small 
\begin{align}
   \nonumber \frac{\partial \mathcal{L}}{\partial T_i} & = \bar{b}^{T} - \lambda_i^{T}\tilde{H}_i^{p_1} = 0, \\
    (\tilde{H}_i^{p_1})^{T} \lambda_i & = \bar{b}^{T}.
\end{align}
}
Combining the above equation with the conditions in Eq.~\eqref{eq:bounds_T} and~\eqref{eq:ps-cbf-expectation} yields
{\small 
\begin{align*}
     (\tilde{h}_i^{p})^{T} \lambda_i \leq b_i + \beta_i
\end{align*}
}
The second component of the derivative yields
{\small 
\begin{align*}
    \frac{\partial \mathcal{L}}{\partial \tilde{z}_i} & = \bar{b}^{T}\frac{\partial T_i}{\partial \tilde{z}_i}
    - \lambda_i^{T}(\tilde{H}_i^{p_1} 
    \frac{\partial T_i}{\partial \tilde{z}_i}
    - \tilde{H}_i^{p_2}) = 0 \\
    & =  \underbrace{(\bar{b}^{T} - \lambda_i^{T}\tilde{H}_i^{p_1})}_{: =0}
    \frac{\partial T_i}{\partial \tilde{z}_i} + \lambda_i^{T}\tilde{H}_i^{p_2} = 0 \\
    \lambda_i^{T}\tilde{H}_i^{p_2} & = 0 
\end{align*}
}
Introducing auxiliary variable $t_i$ to represent $\max_{T_i\in \P_i}  \bar{b}^{T}T_i$, the inner optimization can be substituted for an equivalent asymmetric dual problem \cite{boyd2004convex, mazouz2024piecewise}, yielding an LP
{\small 
\begin{align*}
    & \min_{u_i} t_i \qquad \text{subject to: } \quad \eqref{eq:dual-c}-\eqref{eq:dual-i}
 \end{align*}
}
The inner optimization problem is convex, which allows one to re-write the bi-level optimization problem as a single LP using the KKT conditions \cite{sinha2017review, dempe2016solution}.
\end{proof}

\paragraph*{Computational Complexity} 
The time complexity of a standard primal LP is $\O(n^2m)$ where $n$ is the number of decision variables and $m$ is the number of constraints \cite{boyd2004convex}. 
The program in the above theorem has $n = 3K^2 + 8K + 2$ and $m = 2K^2 + 10K + L + 1$, where $L = \lvert \{i : X_i \cap X_0 \neq \emptyset \} \rvert$.

We finally note that, while traditional LP solvers can be slow for large-scale problems, the LP formulation presented here enables alternative methods like gradient descent, which can be more scalable and efficient. In future work, we plan to explore such methods to achieve even higher scalability.
\section{Experiments}

\label{sec:experiments}

\begin{figure*}
    \centering
    \begin{subfigure}[t]{0.30\textwidth}
        \centering
        \includegraphics[width=1\linewidth]{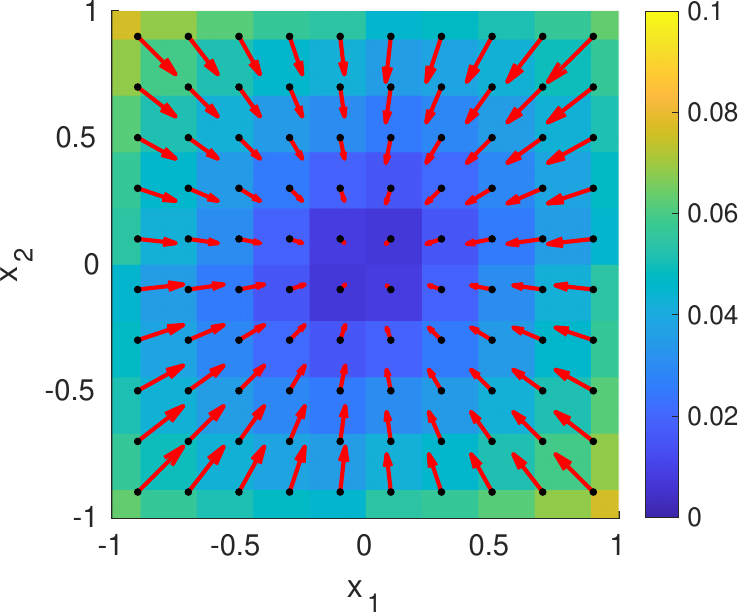}
        \subcaption{Closed-loop vector field. }
        \label{fig:control}
    \end{subfigure}%
    \hspace{5mm}
    \begin{subfigure}[t]{0.33\textwidth}
        \centering
        \includegraphics[width=1\linewidth]{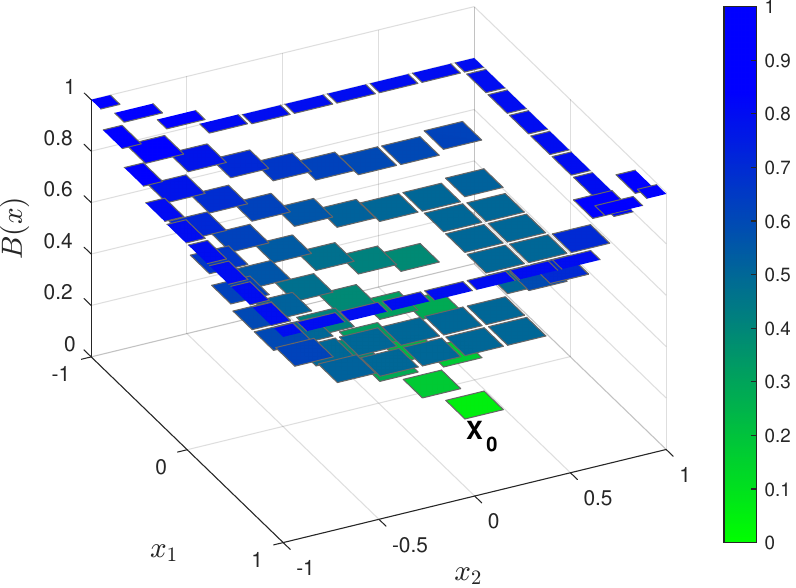}
        \subcaption{PWC s-CBF.}
        \label{fig:barrier}
    \end{subfigure}%
    \hspace{5mm}
    \begin{subfigure}[t]{0.30\textwidth}
        \centering
        \includegraphics[width=\linewidth]{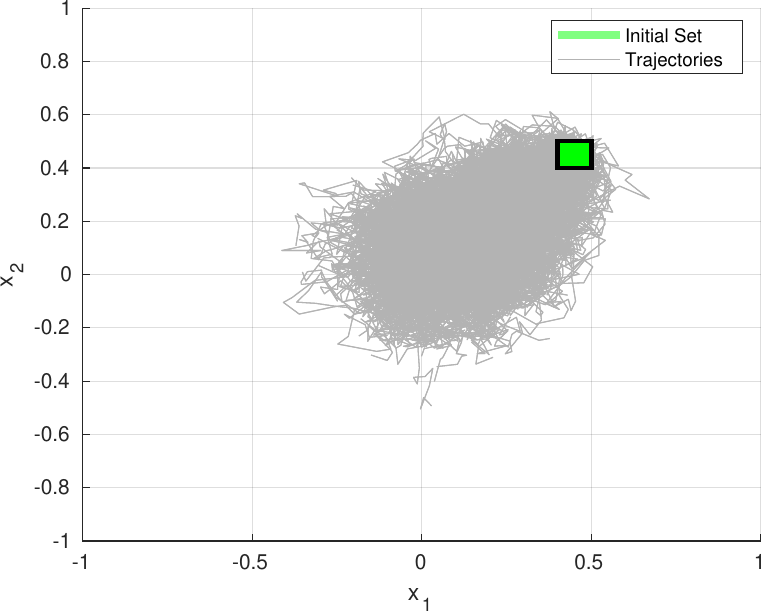} 
        \subcaption{Monte-Carlo Simulations.}
        \label{fig:mc1}
    \end{subfigure}%
    \caption{Convex safe set case study on a linear system. (a) Visualization of the closed-loop vector field, where the background color represents the control magnitude. (b) Representation of the Piecewise Stochastic Control Barrier Function. (c) Results from Monte Carlo simulations, showing 500 trajectories over $N=50$ steps. }
    \label{fig:2d-result}
    \vspace{-3mm}
\end{figure*}

\begin{figure*}
    \centering
    \begin{subfigure}[t]{0.30\textwidth}
        \centering
        \includegraphics[width=1\linewidth]{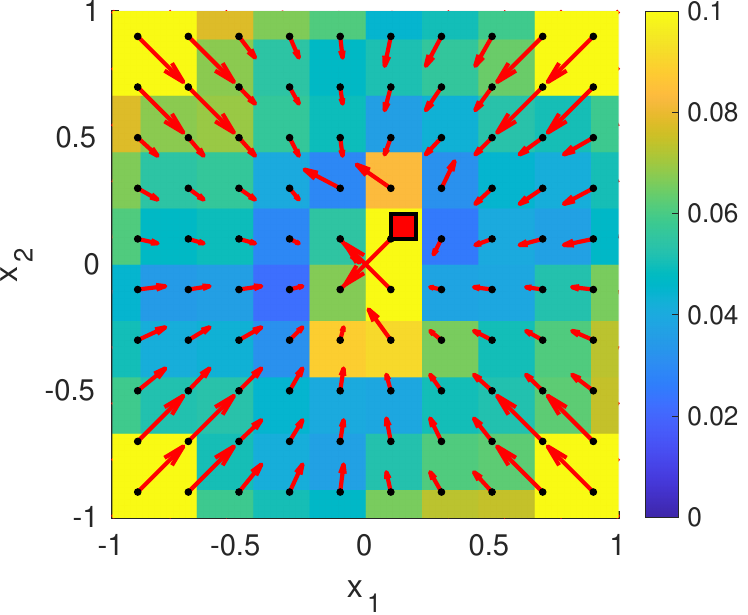}
        \subcaption{Closed-loop vector field.}
        \label{fig:control-nonconvex}
    \end{subfigure}%
    \hspace{5mm}
    \begin{subfigure}[t]{0.33\textwidth}
        \centering
        \includegraphics[width=1\linewidth]{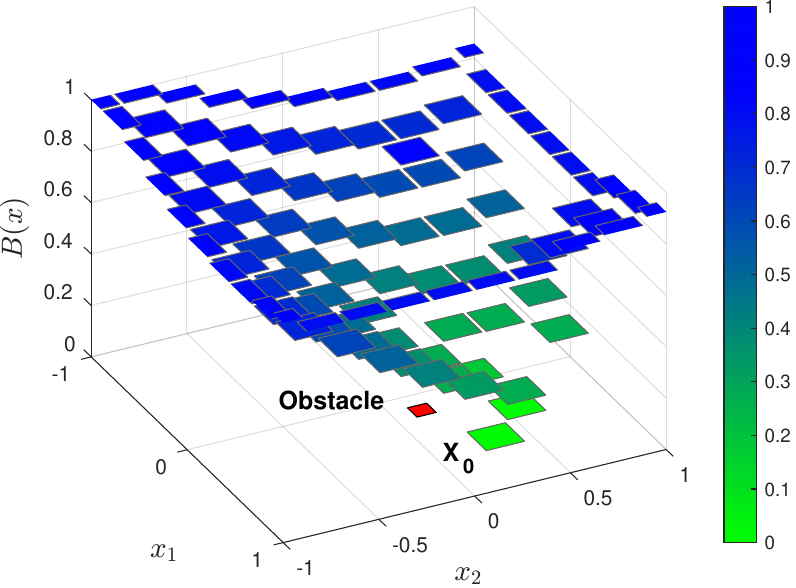}
        \subcaption{PWC s-CBF.}
        \label{fig:barrier-nonconvex}
    \end{subfigure}%
    \hspace{5mm}
    \begin{subfigure}[t]{0.30\textwidth}
        \centering
        \includegraphics[width=\linewidth]{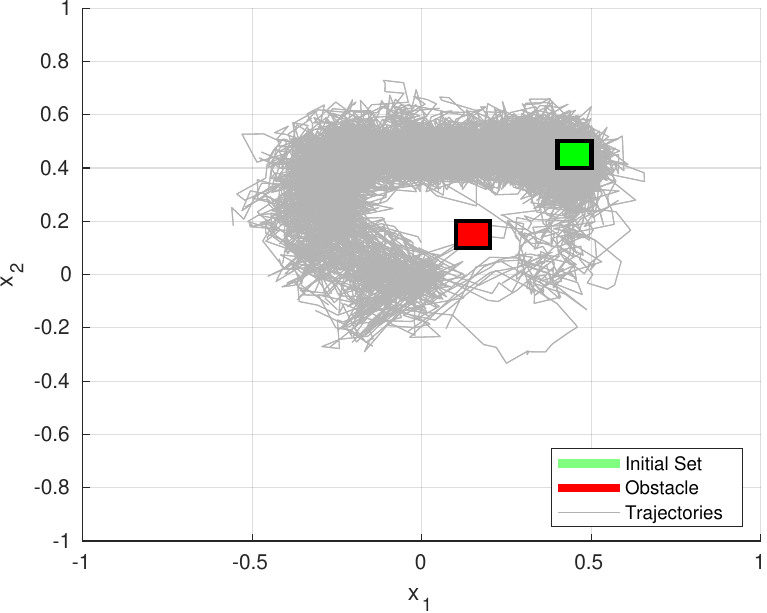} 
        \subcaption{Monte Carlo simulations. }
        \label{fig:mc1-nonconvex}
    \end{subfigure}%
    \caption{Non-convex safe set case study on a linear system. (a) Visualization of the closed-loop vector field, where the background color represents the control magnitude. (b) Representation of the Piecewise Stochastic Control Barrier Function. (c) Results from Monte Carlo simulations, showing 500 trajectories over $N=50$ steps. }
    \label{fig:2d-result-nonconv}
    \vspace{-3mm}
\end{figure*}

\begin{table}[b!]
\caption{Benchmark results. In the table, $K$ denotes the size of the partition of $X_s$, and $t_\text{synth}$ is the optimization (PWC s-CBF synthesis) time. For all experiments, a total of $N=50$ time-steps is used.}
\label{table:results}
\centering
\begin{tabular}{@{}llllll@{}}
\toprule
System & $K$ & $\eta$ & $\beta$ & $P_s$ & $t_\text{synth}$ (s) \\ \midrule
2D Linear & 81 &  0.090 & 0.00420 & 0.70 & 0.91 \\
Convex & 100 & 0.002 & 0.00036 & 0.98 & 2.42 \\
 & 400 & 0.002  & 0.00034 & 0.98 & 19.85 \\ \midrule
2D Linear & 81 & 0.11 & 0.00610 & 0.59 & 1.47 \\
Non-Convex & 100 & 0.03 & 0.00022  & 0.96 & 3.19 \\
 & 400 & 0.03 & 0.00019 & 0.96 & 27.89 \\ \midrule
3D Temp. Model & 500 & 0.0475 & 0.00025 & 0.94 & 301.47 \\
 & 900 & 0.0429 & 0.00023 & 0.95 & 497.03 \\ \midrule
4D Unicycle & 1800 & 0.07 & 0.0220 & 0.82 & 2789.67 \\
 & 2400 & 0.02 & 0.0006 & 0.95 & 3945.77 \\ \bottomrule
\end{tabular}
\end{table}

We demonstrate the efficacy of our approach on various benchmarks. 
We consider a 2D linear system with a convex and nonconvex safe set, a 3D polynomial room temperature model \cite{girard2015safety}), and a 4D nonlinear unicycle model \cite{de2000stabilization}.

The safety probability bounds obtained with our framework are validated by 500 Monte
Carlo simulations, randomly initialized in $X_0$. 
Our IBP code that computes affine transition kernel bounds is in Python, and the PWC s-CBF tool is in Julia. All computations were performed on a Linux machine with 3.9 GHz
8-core CPU and 128 GB memory. 
In each of the experiments, we set a safety threshold of $\delta_s = 0.95$ (that is, 95\%) for $N= 50$ time steps.

Results for the various benchmarks considered are reported in Table~\ref{table:results}. As can be observed, with sufficient partitions for each case study, our method is able to synthesize controllers and a barrier certificate that yield $P_s \geq \delta_s = 0.95$. We discuss individual case studies in detail below.

\begin{figure*}
    \centering
    \begin{subfigure}[t]{0.31\textwidth}
        \centering
        \includegraphics[width=1\linewidth]{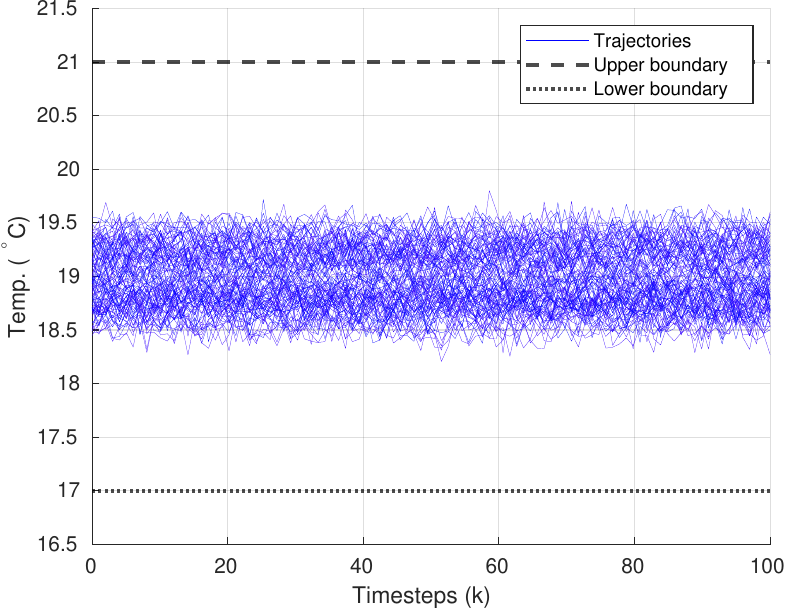}
        \subcaption{Room 1}
        \label{fig:room1}
    \end{subfigure}%
    \hspace{5mm}
    \begin{subfigure}[t]{0.305\textwidth}
        \centering
        \includegraphics[width=1\linewidth]{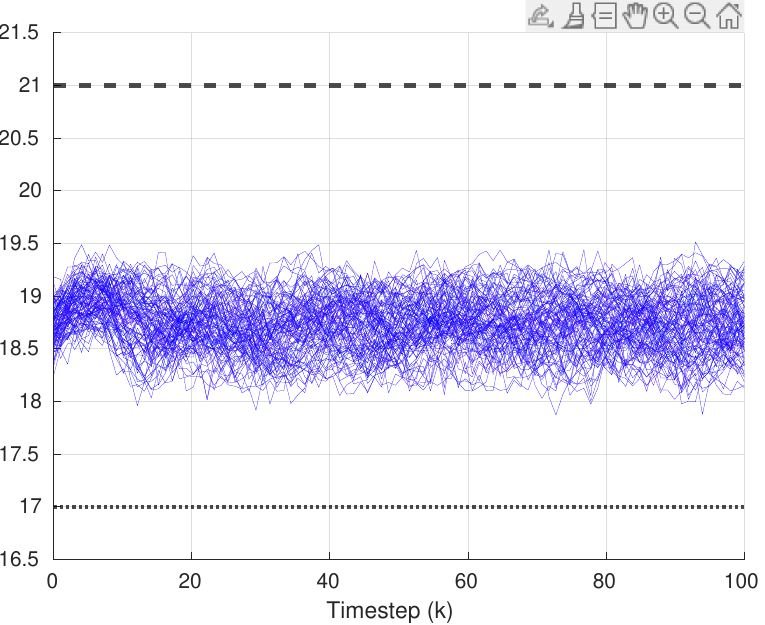}
        \subcaption{Room 2}
        \label{fig:room2}
    \end{subfigure}%
    \hspace{5mm}
    \begin{subfigure}[t]{0.295\textwidth}
        \centering
        \includegraphics[width=\linewidth]{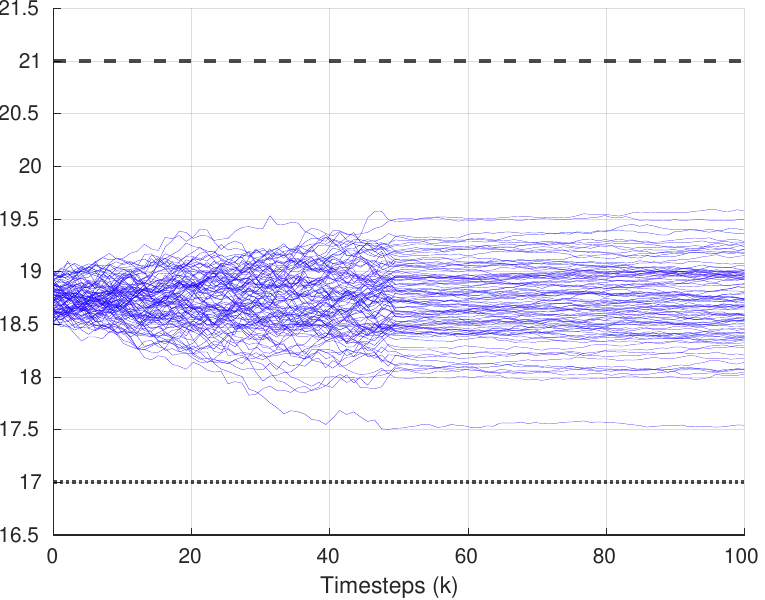} 
        \subcaption{Room 3}
        \label{fig:room3}
    \end{subfigure}%
    \caption{Monte Carlo simulations for the temperature regulation model, showing 500 trajectories over $N=100$ steps. }
    \label{fig:room-temp}
\end{figure*}

\begin{figure*}
    \centering
    \begin{subfigure}[t]{0.25\textwidth}
        \centering
        \includegraphics[width=1\linewidth]{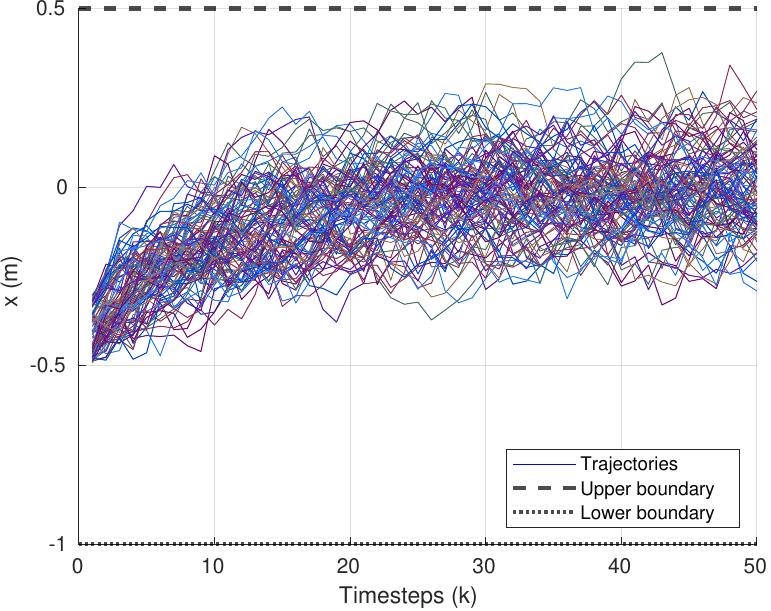}
        \subcaption{Time evolution $x$.}
        \label{fig:unicycle-x}
    \end{subfigure}%
        \begin{subfigure}[t]{0.25\textwidth}
        \centering
    \includegraphics[width=1\linewidth]{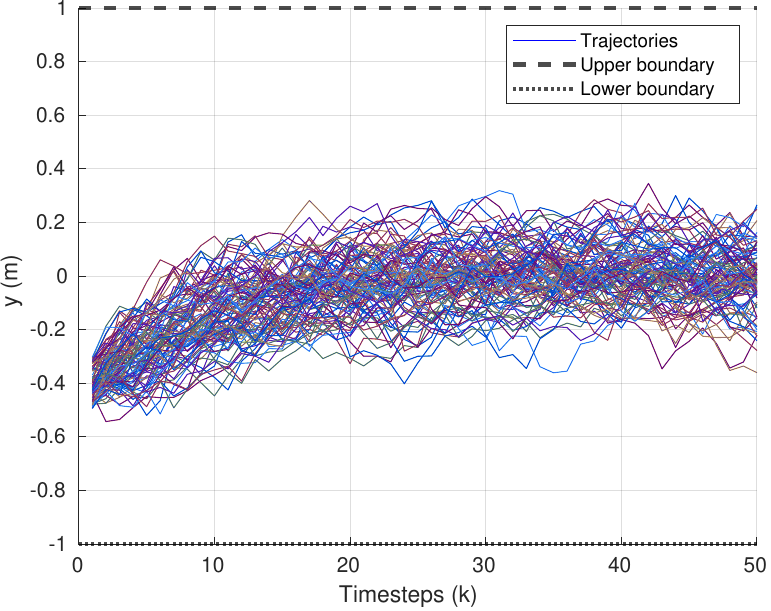}
        \subcaption{Time evolution $y$.}
        \label{fig:unicycle-y}
    \end{subfigure}%
    \begin{subfigure}[t]{0.25\textwidth}
        \centering
        \includegraphics[width=1\linewidth]{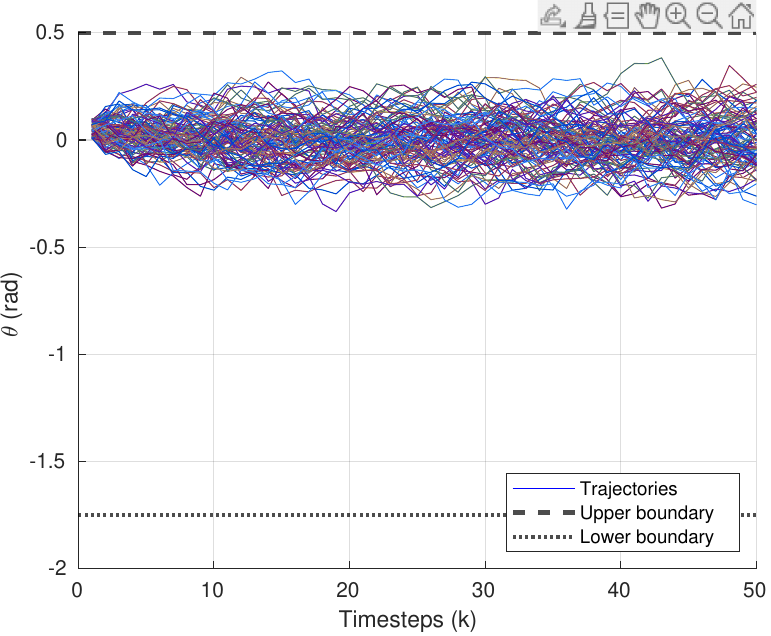}
        \subcaption{Time evolution $\theta$.}
        \label{fig:unicycle-theta}
    \end{subfigure}%
    \begin{subfigure}[t]{0.25\textwidth}
        \centering
        \includegraphics[width=\linewidth]{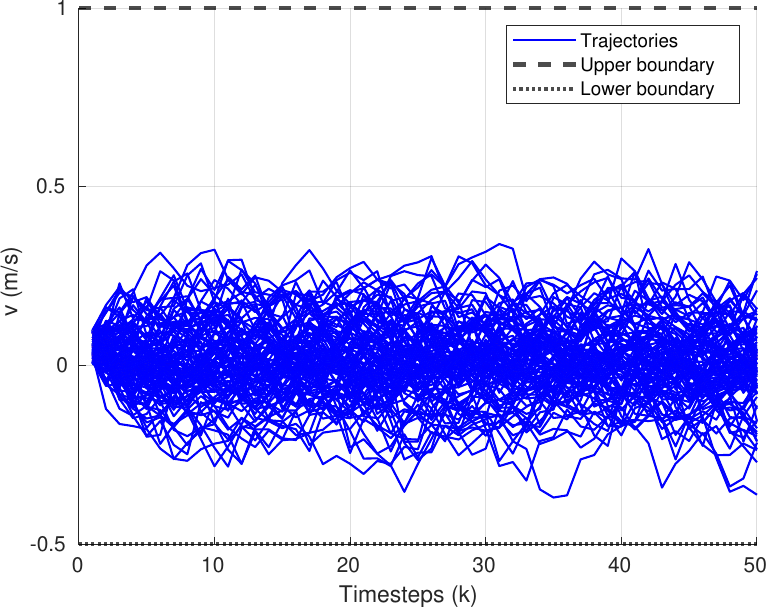} 
        \subcaption{Time evolution $v$.}
        \label{fig:unicycle-v}
    \end{subfigure}%
    \caption{Monte Carlo simulations for the unicycle model, showing 500 trajectories over $N=50$ steps. }
    \label{fig:unicycle}
\end{figure*}

\subsection{Linear System}

We consider the following 2D linear system with dynamics
\begin{align*}
    x_{k+1} = 1.05 \, x_k + 0.1 \, u_k + \mathbf{w}
\end{align*}
where 
$\pw \sim \N(0,10^{-2}I)$. The operating domain is $X \in [-1, 1]^2$, and the initial set is $X_0 \in [0.4, 0.5]^2.$ Note that the inherent (uncontrolled) dynamics of the system are unstable. We consider two cases
\begin{enumerate}
    \item Convex safe set: $X_s = X$ (see Fig.~\ref{fig:mc1}), and
    \item Non-convex safe set
    $X_s = X \setminus \C_1(c_1,\epsilon_1)$, where for obstacle $\C_1$: 
        $c_1 = (0.15, 0.15)$ and $\epsilon_1 = (0.05, 0.05)$ (see Fig.~\ref{fig:mc1-nonconvex}).
\end{enumerate}

For the convex case, the synthesized control law and barrier are shown in Figs. \ref{fig:control} and \ref{fig:barrier}, respectively.  The probability of safety for the $N =50$ time steps is $P_s = 0.98$. 
In Fig. \ref{fig:mc1}, a Monte Carlo simulation plot is provided under the synthesized controller for 500 trajectories over a duration of $50$ time steps.  In each case, the
randomly sampled trajectories resulted in
zero violations, i.e., the system is 100\% safe. This is in line with the lower bound on $P_s$ that we obtain. 
To get to $P_s = 0.98$, the barrier and control synthesis computation takes $\approx$ 2.42 seconds (for $K=100$ partitions).

For the non-convex case, the obtained probability of safety for $N= 50$ equals $P_s = 0.96$. The synthesized control law and barrier are shown in Figs. \ref{fig:control-nonconvex} and \ref{fig:barrier-nonconvex}, respectively. Observe in Fig.~\ref{fig:barrier-nonconvex} that, different from the convex case, the barrier value equals to one at the obstacle region. Further, notice in Fig.~\ref{fig:control-nonconvex} that the controller near the obstacle properly pushes the vector field in a direction away from unsafety. 

The Monte Carlo simulations are depicted in Fig.~\ref{fig:mc1-nonconvex}. A total of 6 out of 500 trajectories violated the safety boundaries, entering the unsafe set (obstacle region). This  
indicates a frequentist probability of safety of 99\%, slightly higher than the bound we obtain. For this experiment, the synthesis computation takes about 3.19 seconds for the $K=100$ partitions, to get to $P_s \geq \delta_s$.

\subsection{Temperature Regulation}

The following room temperature regulation  model has deterministic dynamics 
\begin{align*}
  \mathrm{T}_{k+1}^i = & \mathrm{T}_k^{i} + \alpha_1(\mathrm{T}_k^{i+1} + \mathrm{T}_k^{i-1} - 2\mathrm{T}_k^{i}) +
  \alpha_2(\mathrm{T}_e - \mathrm{T}_k^{i}) \\
  + & \alpha_3(\mathrm{T}_h - \mathrm{T}_k^{i})u_i 
  + \mathbf{w},
\end{align*}
to which noise $\pw \sim \N(0, 10^{-2}I)$ is added. A total of $n=3$ rooms are considered, where $\mathrm{T}_k^{i+1}, \mathrm{T}_k^{i-1}$ denote the neighboring rooms. The ambient and heater temperature are $\mathrm{T}_e = -1^\circ$ and $\mathrm{T}_h = 50^\circ$, respectively. The conduction factors are $\alpha_1 =0.45$, $\alpha_2 = 0.045$ and $\alpha_3= 0.09$. The controller $u_i \in [0, 0.5]$,  while the regions of interest are $X_s = [17, 21]^3$ and $X_0 = [18.5, 19.5] \times [18.5, 19.0]^2$. 

For $N = 50$ time-steps, the lower bound on the probabilistic safety for the system is $P_s = 0.95$.
Total CBF and SBF synthesis time is 497.03 seconds for $K=900$ partitions. 
Performing verification on the system, without the controller, yields a safety lower bound of $P_s = 0$.
In Fig.~\ref{fig:room-temp}, for all three rooms, 500 randomly sample trajectories have been propagated for 100 time-steps, under the synthesized control law. As can be observed, none exited the safe set.

\subsection{Nonlinear Unicycle}

Finally, we consider a wheeled mobile robot with the dynamics of a unicycle
\begin{align*}
    \dot{x} = v \cos \theta, \quad
    \dot{y} = v \sin \theta, \quad
    \dot{\theta} = \omega, \quad 
    \dot{v} = a.
\end{align*}
The Cartesian position is defined by \( x \in [-1.0, 0.5] \) and \( y \in [-1.0, 1.0] \). Further, \( \theta \in [-1.75, 0.5] \) represents the orientation relative to the \( x \)-axis, and \( v \in [-0.5, 1.0] \) denotes the speed. The initial set is chosen to be $X_0 \in \mathcal{B}_{\mathbb{R}^4}((-0.4, -0.4, 0, 0), 0.1)$.
The inputs are defined by the steering rate $\omega$ and $a$ as the acceleration. For the discrete-time model, $\Delta t= 0.01$ is used alongside the Euler method. Noise is added to this system, where $\mathbf{w}\sim \N(0, 10^{-4}I)$, capturing the discretization error inherent to the Euler method.

For $N=50$ time-steps, the probability of safety is  $P_s = 0.95$. 
The total optimization time ($K = 2400$ partitions) is 3945.77 seconds. 
The 500 Monte Carlo trajectories for each state are shown in Fig.~\ref{fig:unicycle}, indicating no safety violations.

\section{Conclusion}
This paper presents a method for jointly synthesizing a barrier certificate and a safe controller for discrete-time nonlinear stochastic systems. Using piecewise stochastic CBFs, we frame the problem as a minimax optimization, efficiently solved via linear programming. Our approach handles non-additive dynamics with unbounded-support noise. Case studies on linear and nonlinear systems validate its effectiveness in achieving the desired safety probability.

Future work includes improving scalability through faster dual LP solvers, extending to higher-dimensional systems, and incorporating adaptive partitioning for efficiency.
This also includes a more rigorous analysis to guide the selection of the partition size.
Additionally, we aim to minimize control effort in synthesis and integrate reach-avoid objectives for broader applicability in safety-critical settings.

\bibliographystyle{IEEEtran}
\bibliography{cite}

\end{document}